\setlist[itemize]{noitemsep, topsep=0pt}
\newtheorem{theorem}{Theorem}
\newtheorem{lemma}[theorem]{Lemma}
\newtheorem{proposition}[theorem]{Proposition}
\newtheorem{definition}{Definition}
\newcommand{\BibTeX}{B\kern-.05em{\sc i\kern-.025em b}\kern-.08em\TeX}
\newcommand{\Er}{\mathbf{E}}
\newcommand{\Ac}{\ensuremath{\mathcal{A}}}
\newcommand{\Gc}{\ensuremath{\mathcal{G}}}
\newcommand{\Nc}{\ensuremath{\mathcal{N}}}
\newcommand{\Rc}{\ensuremath{\mathcal{R}}}
\newcommand{\Xc}{\ensuremath{\mathcal{X}}}
\newcommand{\lsb}{\ensuremath{\left[}}
\newcommand{\rsb}{\ensuremath{\right]}}
\newcommand{\fref}[1]{\fref{#1}}
\newcommand{\cref}[1]{Chapter~\ref{#1}}
\newtheorem{problem}{Problem}
\DeclareMathOperator*{\argmax}{arg\,max}
\begin{document}

%%%%%%%%%%%%%%%%%%%%%%%%%%%%%%%%%%%%%%%%%%%%%%%%%%%%%%%%%%%%%%%%%%%%%%%%

\begin{frontmatter}

%%% Use this command to specify your submission number.
%%% In doubleblind mode, it will be printed on the first page.

\paperid{1338} 

%%% Use this command to specify the title of your paper.

\title{United We Stand: Decentralized Multi-Agent \\Planning With Attrition}

%%% Use this combinations of commands to specify all authors of your 
%%% paper. Use \fnms{} and \snm{} to indicate everyone's first names 
%%% and surname. This will help the publisher with indexing the 
%%% proceedings. Please use a reasonable approximation in case your 
%%% name does not neatly split into "first names" and "surname".
%%% Specifying your ORCID digital identifier is optional. 
%%% Use the \thanks{} command to indicate one or more corresponding 
%%% authors and their email address(es). If so desired, you can specify
%%% author contributions using the \footnote{} command.

\author[A]{\fnms{Nhat}~\snm{Nguyen}\thanks{Corresponding Author. Email: nhatdaoanh.nguyen@adelaide.edu.au}}
\author[A]{\fnms{Duong}~\snm{Nguyen}}
\author[B]{\fnms{Gianluca}~\snm{Rizzo}}
\author[A]{\fnms{Hung}~\snm{Nguyen}} 

\address[A]{The University of Adelaide, Australia}
\address[B]{HES SO Valais, Switzerland, and the University of Foggia, Italy.}

%%% Use this environment to include an abstract of your paper.

\begin{abstract}
Decentralized planning is a key element of cooperative multi-agent systems for information gathering tasks. However, despite the high frequency of agent failures in realistic large deployment scenarios, current approaches perform poorly in the presence of failures, by not converging at all, and/or by making very inefficient use of resources (e.g. energy). In this work, we propose Attritable MCTS (A-MCTS), a decentralized MCTS algorithm capable of timely and efficient adaptation to changes in the set of active agents. It is based on the use of a global reward function for the estimation of each agent's local contribution, and regret matching for coordination. We evaluate its effectiveness in realistic data-harvesting problems under different scenarios. We show both theoretically and experimentally that A-MCTS enables efficient adaptation even under high failure rates. Results suggest that, in the presence of frequent failures, our solution improves substantially over the best existing approaches in terms of global utility and scalability.
\end{abstract}

\end{frontmatter}

%%%%%%%%%%%%%%%%%%%%%%%%%%%%%%%%%%%%%%%%%%%%%%%%%%%%%%%%%%%%%%%%%%%%%%%%

\section{Introduction}

Cooperative multi-agent systems (MAS) are systems where multiple agents (such as autonomous vehicles/drones) work together to achieve a common goal such as maximizing a shared utility \citep{dorri2018multi}. These agents can communicate and coordinate with each other, either directly or indirectly, to solve complex tasks that are beyond a single agent's capabilities. Examples are drone swarms for autonomous aerial surveillance or disaster relief operations, or teams of robots that collaborate to explore unknown environments, harvest data from sensors, or manipulate objects, among others \citep{xie2017multi}.s.

Centralized approaches for addressing the MAS planning problem do not scale with the number of agents, as the amount of computational resources required to solve it quickly becomes prohibitive. In addition, the amount of information exchange required for a centralized planner to manage all agents can be unfeasibly high in large-scale settings, particularly in remote areas and disaster scenarios \citep{trodden2009robust}. Thus, recent research has focused on decentralized approaches for online MAS planning~\citep{zhang2021multi}. Indeed, they offer enhanced robustness, reduced computational burden, and lower communication load, particularly on infrastructure-based communications such as cellular radio access networks ~\citep{claes2017decentralised}.. 

The main challenge in decentralized approaches for cooperative MAS is optimizing agents' actions in a distributed manner to maximize a global reward function. This problem can typically be modeled as a Decentralized Partially Observable Markov Decision Process (Dec-POMDP)~\citep{oliehoek2016concise}. However, the computational complexity of Dec-POMDPs presents a significant hurdle, making direct optimal solution search infeasible in polynomial-time \citep{bernstein2002complexity}.
Several sampling-based planning algorithms have also been proposed to improve computational efficiency, particularly for special classes of Dec-POMDPs, such as multi-robot active perception \citep{zhang2021multi}.
A first family of approaches to address this is given by point-based methods, which scale well, but they may not cover the entire belief space well \citep{POINT_pineau2003point,POINT_shani2013survey}. Another set of algorithms is based on policy search \citep{POLICY_seuken2011memory, POLICY_amato2010optimizing} based on a parameterized policy representation. They can handle problems with large action spaces, but they may get stuck in local optima or require many samples
\citep{POLICY_amato2010optimizing}.
Thus, attention has turned towards algorithms based on Monte Carlo tree search (MCTS), due to their ability to effectively explore long planning horizons, their anytime nature~\citep{kocsis2006improved}, and their excellent performance in decentralized settings \citep{claes2017decentralised, best2019dec, czechowski2020decentralized}, effectively overcoming the limitations of other approaches.

In many present-day MAS application scenarios, the departure of agents from the system (henceforth denoted as \textit{attrition}, and due to e.g. failures or energy depletion) is a very common feature. However, all of the main approaches to decentralized MAS planning assume agents are always available and actively contributing to the joint planning process. When applied to scenarios with attrition, they perform in a heavily suboptimal manner and they often do not converge at all \citep{cybenko2021attritable,nguyen2022multi}. In swarm robotics, for instance, agent attrition due to robot failures, damage, or energy depletion affects the overall swarm behavior and task completion. Designing robust algorithms for decentralized MAS planning capable of effectively handling agent loss is critical for their successful deployment many in practical scenarios.

The common approaches for scenarios with attrition are based on periodically resetting agents' learned behavior, and restarting the learning process~\citep {avner2014concurrent,rosenski2016multi,hanawal2021multiplayer}. In volatile settings with frequent failures, such a feature may significantly hamper the overall performance of the active perception task, by slowing down the convergence rate and by keeping the system far from adapting and thus from achieving optimal operating conditions. 
Therefore, how to efficiently and effectively perform online MAS planning in the presence of attrition, while achieving fast convergence, is a key open issue.

In this paper, we develop a novel decentralized planning algorithm that achieves both of these objectives. Our approach is based on MCTS and the use of the global reward instead of the local one in the estimation of each agent's local contribution. Moreover, it exploits regret matching (RM)~\citep{hart2000simple} to coordinate the actions between agents. We prove that our approach guarantees that the average joint action of all agents converges to a Nash equilibrium (NE) if every agent applies the same RM procedure in any cooperative game with a submodular utility function. Arriving at an NE guarantees no diverging interest between the agents, and therefore it ensures that all participants come into a self-enforcing agreement to effectively coordinate their action decisions in a decentralized manner.
The main contributions of our work are:

\begin{itemize}[leftmargin=*]
\item We develop Attritable MCTS (A-MCTS), a new online decentralized planning algorithm, based on MCTS and Regret Matching, that can quickly and efficiently adapt the plan to settings where agents fail, even at high rates.
\item We show that, by modulating the utility function for each agent, under the assumption of submodularity, successive iterations are guaranteed to improve joint policies, and eventually lead to convergence of our algorithm.
We prove a strong convergence result for approximating a pure-strategy Nash equilibrium in a fully distributed fashion.
\item We evaluate our proposed approach in several information-gathering scenarios with attrition. Results suggest that, in the presence of frequent failures, our solution improves substantially over the best existing approaches in terms of global utility and scalability.
\end{itemize}
\section{Related Work}

Information-gathering problems are often modeled as sequential decision-making problems~\citep{yao2020path}. When there are multiple agents, decentralized information gathering can be viewed as a decentralized POMDP~\citep{bernstein2002complexity}. The dominant approach to Dec-POMDP is to first solve the centralized, offline planning over the joint multi-agent policy space, and then push these policies to agents to execute them in a decentralized fashion~\citep{oliehoek2016concise}. When the state of the environment or agents is not known ahead of time, these approaches become infeasible. Fully decentralized Dec-POMDP solvers exist~\citep{spaan2006decentralized}. However, they require significant memory and incur high computational complexity due to the requirements to compute and store all the reachable joint state estimations~\citep{lauri2020multi}.

Recently, simultaneous distributed approaches based on MCTS have gained significant interest due to their flexibility in trading off computation time for accuracy. The key idea is to use the upper confidence bound (UCB) \citep{kocsis2006improved} for planning the best course of action. To implement cooperation between agents, these methods usually keep a predefined model of the teammates, which can be heuristic or machine learning trained \citep{claes2017decentralised,czechowski2020decentralized,choudhury2021scalable}. However, as they are based on trained knowledge, they are unsuitable for online planning in settings that change unpredictably, such as in disastrous environments.
 
To address this, new approaches based not on apriori knowledge about agents' behavior, but on information sharing among them, have been proposed (Dec-MCTS~\citep{best2019dec}). A key aspect of Dec-MCTS and all its subsequent variations \citep{li2019integrating,li2021dec,nguyen2022multi} is that each agent is assigned a local utility function, which does not measure the total team reward but the contribution of that agent only. To deal with any uncertainty that arises during the mission, Dec-MCTS algorithms allow for online replanning during execution, by having agents update their beliefs about the system. 

Under high uncertainty scenarios, a growing body of literature review in the area of multi-drone systems~\citep{williams2014multi, lizzio2022review, dinelli2023configurations, goeckner2023attrition} explores the significant challenges posed by agent attrition -- the loss or removal of individual drones (due to mechanical failures, environmental factors, and human errors), and highlights the need to address attrition for robust system performance. In systems with attrition in which agents may fail abruptly, all of the above-mentioned approaches do not apply, as they do not allow adjusting to attrition in agents' populations. In the present work, we show that the suboptimality of current Dec-MCTS algorithms is due to the usage of the marginal contribution combined with the submodular properties of the global utility functions.

Another body of literature on related works concerns Open Agent Systems (OASYS), in which agents can enter and leave over time. Most solutions to OASYS are either fully or partially offline, i.e., offline planning with online execution \citep{cohen2017open, chandrasekaran2016individual} or online planning with precomputed offline policies \citep{eck2020scalable}. This is not feasible in applications with significant sources of uncertainty, particularly when the environment's state or mental models of the agents are unknown in advance, and when the agents' failures occur abruptly during execution. A recent work \citep{kakarlapudi2022decision} proposed a fully online approach that leverages communication between agents related to their presence to predict the actions of others. This approach assumes that agents can communicate their existence implicitly. In scenarios where the communication is intermittent or agents vanish without warning, the unforeseeable failure can significantly impact coordination and planning, jeopardizing the overall performance. In addition, the computational complexity of modeling each other's presence and predicting their actions can make the system computationally intractable on a large scale. Thus, it isn't directly equipped to handle sudden agent failures and may require further refinement to be viable in large-scale or highly dynamic environments.

Our paper focuses on problems where agents experience hard failures in an abrupt and unforeseeable fashion. This unpredictable nature makes the existing techniques not directly applicable. Therefore, more research is needed to enhance the robustness and resilience of multi-agent systems in such uncertain attrition scenarios. Our work explicitly tackles this challenge by providing a new approach for online decentralized planning for multi-agents that does not require precomputed offline policies and mental models. Instead, agents reason about the actions and existence of others using directly communicated information.
We employ a computational-effective game-based technique to coordinate agents, enabling adaptive decision-making in the presence of peer failures while ensuring fast convergence in polynomial time relative to system size.
\section{Problem Formulation}
\label{sec:problem_formulation}
% \subsection{System Model}

In this paper, we consider a set $\mathcal{N}$ of $N$ autonomous agents moving within a given area of space. We consider a set of $R$ \textit{regions of interest} in a given area, where $R_k$ is the $k$-th element of the set. We assume each region is a sphere with an equal radius, however, the formulation could extend to more complex models.
Without loss of generality, we assume agents move along an undirected graph $G = (V, E)$ that is placed in the same space as the regions of interest. Each vertex $v_i \in V$ represents a location, and each edge $e_{ij}$ represents a feasible route from vertex $v_i$ to $v_j$. A key property of this graph is that it traverses at least once every region of interest.
This graph typically models constraints to agent trajectories due to the morphology of the monitored environment, presence of obstacles, regions of interest distribution, and characteristics of agent movements, among others. The specific way in which the graph is derived is thus application and context-dependent  \citep{nguyen2022multi}. The graph is defined at the beginning of the mission, it does not change over time and it is known by all agents.

The \emph{path} of agent $n$, denoted as $p^n$, is an ordered list of edges $p^n = (e^n_1, e^n_2,...)$, such that two adjacent edges in the path are connected by a vertex of the graph. With $p = (p^1, ..., p^n, ..., p^N)$ we denote the joint paths of every agent. Let $B$ denote the maximum path length of each agent, which equals the number of edges an agent can traverse. Such a maximum value is derived from the agent's speed, but it may also capture various constraints, e.g. due to finite storage capacity, among others. To any path $p^n$ we associate a cost $b(p^n)$, equal to the number of traversed edges. A region $R_k$ is \textit{observed} if it is traversed by a path of an agent. Every region $R_k$ is associated with a utility $U(R_k)$, which models the value of the information that agents may collect from it. For ease of analytical treatment, we assume that it takes one unit of time to traverse any edge and that any exchange of information among agents is instantaneous. Note however that our approach can be easily extended to account for nonzero exchange duration, as well as for edge traversal times that differ among edges and agents.

Finally, we assume each agent can exchange information at any point in time with any other agent. This models scenarios in which agents have a wireless interface to a cellular access network. We assume the information exchange to be instantaneous, independent from the amount of information shared, and reliable, with no loss. In the experimental section, we relax this assumption and investigate the impact of nonidealities in information sharing on the effectiveness of our approach.

\subsection{Multi-agent planning with attrition}

\label{sec:attrition_settings}

We denote the information-gathering task as a \textit{mission}, for which each agent performs independent actions to achieve a collective goal - maximizing the global utility for the whole team. Each agent $n$ plans its path $p^n$ and coordinates with others in a decentralized manner while satisfying the given budget constraint $B$ on path length.
This formulation of the information gathering problem generalizes many multi-agent path planning problems, such as team orienteering problem~\citep{best2018online}.
%\footnote{\GR{Please add again here the formal definition of the information gathering problem you once put here.}}
We consider a scenario, in which mission planning is performed in a decentralized manner for scalability and computational feasibility, as mentioned in Section 1.
%\HN{Need to motivate a decentralized approach first}\footnote{I see better such a motivation in the intro section, not here.}, 
Thus, each agent plans its path while considering the potential actions of other agents and the team's total utility.
%\HN{I thought we have decided to completely remove communication impact and always assume full comms? I would remove this whole paragraph}
%We are now ready to describe the multi-agent in attrition settings. 
We consider scenarios where a subset $\mathcal{F}$ of the N agents fail during the mission. We focus on \textit{hard} failures, where agents interrupt reward collection and information exchange.
We assume the set of agents that fail $\mathcal{F}$ is unknown in advance and the time at which they fail to be determined by any arbitrary criteria or distribution. Therefore, our solution does not rely on knowing its size and probability distribution.
In the occurrence of a failure, all the utility collected by the failed agent is lost, i.e. it is not considered anymore in the computation of the global utility of the mission. This models a typical setup in information gathering, in which data collected by agents is relayed to data sinks only at the end of the mission.

Our goal is to provide an efficient planning and coordination mechanism that can quickly adapt to agent failures and maximize the global utility of the mission within the agent's budget constraint. Let $\mathcal{P} = (P^1, ..., P^n, ..., P^N)$, with $P^n$ denote the set of all possible paths of length $B$ which starts at agent $n$ starting position.
We define the following problem:

\begin{problem}\label{prob:1}(\textit{Multi-agent planning in attrition settings)}
\begin{equation}\label{eq:objectivefunction}
% \underset{p \in \mathcal{P}}{\text{maximize}}\   U_g(p) = \sum_{n \in \mathcal{N} \setminus \mathcal{F}} U(p^n)
\underset{p \in \mathcal{P}}{\text{maximize}}\   U_g(p)
\end{equation}
	\begin{align}
           \text{Subject to:} &  \quad b(p^n)\leq B,  \quad \forall n \in \mathcal{N}   \label{constraint_pathlength} \\
           & \quad 0 \leq | \mathcal{F} | \leq N
	\end{align}
\end{problem}

Constraint (\ref{constraint_pathlength}) derives from imposing that the total path length for the agent $n$ to be less than the travel budget $B$ available to each agent. Intuitively, our goal is to find a path for each agent such that the global utility associated with all regions observed by all agents during the mission is maximized, while a subset $\mathcal{F}$ of agents fail. Such an optimization problem cannot be solved efficiently. Indeed it is easy to see that Problem 1 is a variant of the well-known NP-hard traveling salesman problem.
%In the next section, we provide an adaptive and distributed planning solution to Problem 1, which is based on Monte Carlo Tree Search in combination with Regret Matching, that can quickly and efficiently adapt the plan in attrition settings.
% \GR{Introduce what follows: in the following section, we propose.... (relate what follows to the problem formulation: do we solve the problem? or what?}

% In the next section, we describe the state-of-the-art approach for the multi-agent active information gathering problem, namely Decentralized MCTS~\citep{best2019dec}, and explain why it fails in solving Problem 1.

\section{Attritable MCTS with Regret Minimization}
% \GR{introduce to the section: what does it contain? We first prove the root cause of the inefficiency of existing decentralized MCTS approaches (without reset). Then we draw inspiration for designing our new approach, still based on a  decentralized version of MCTS.} 

In this section, we first give a brief introduction to Monte Carlo Tree Search and its most popular decentralized version. We then show the root cause of the inefficiency of existing decentralized MCTS approaches with attrition. 
% Finally, we present our A-MCTS algorithm as a solution to the multi-agent planning in attrition settings problem \ref{prob:1}.

% \subsection{Why does Decentralized MCTS fail with attrition?}
MCTS is an excellent approach for online planning problems \citep{kocsis2006improved}. The tree $\mathcal{T}_n$ for agent $n$ is defined such that each node $s$ of the tree represents a state and each edge $a$ starting from that node represents an available action. A branch from the root node to another node represents a valid action sequence. The tree is incrementally grown via a four-step process: \emph{selection}, \emph{expansion}, \emph{rollout}, and \emph{backpropagation}. Decentralized Monte Carlo Tree Search (Dec-MCTS)~\citep{best2019dec} extends the power of MCTS to MAS using intention sharing. Specifically, agent $n$ maintains a probability mass function $q_n(x_n)$ over the set of all possible action sequences $\Xc_n$, where $x_n \in \Xc_n$ is a primitive action sequence. The intentions of other agents except agent $n$ are denoted by $q_{-n}$ and $\Xc_{-n}$. By taking a probabilistic sampling from the communicated intention, each agent can reconstruct the global utility. To create better coordination, rather than optimizing directly for the global utility $U_g$ of the entire team, each agent $n$ instead optimizes for a local \emph{marginal contribution} utility function $U_n$. That is, agent $n$ estimates the rollout score for executing $x_n$ as: 
\begin{equation} 
\label{eq:marginal}
     F_n(x_n) = U_n(x_n, x_{-n}) = U_g(x_n, x_{-n}) - U_g(x_{-n})\ ,
\end{equation}
where $U_g(x_{-n})$ is the global utility without the contribution of agent $n$.
% \subsection{Dec-MCTS Under Failures}
% \GR{make the paper more general, not just comparing with DEC MCTS. How can we still retain this section without reducing it to a comment on the drawbacks of DEC MCTS? Can we cast this section as a more general statement about a problem with attrition arising when submodularity is there?}

We now analyze Dec-MCTS asymptotic behavior when agents fail. We are particularly interested in \textit{submodular} reward functions, frequently arising in data collection problems~\citep{corah2017efficient,satsangi2018exploiting}. Submodular set function, which is defined in Definition~\ref{def:submodular}, satisfies the diminishing returns property. Regarding the information-gathering problem discussed in this paper (see Section~\ref{sec:problem_formulation}), the marginal gain of adding a new location to the set of visited locations decreases as the number of locations visited increases.

\begin{definition}[Submodular set function]
    \label{def:submodular}
    Let $g: 2^\Omega \rightarrow \mathbb{R}$ be a set function where $2^\Omega$ is the power set of $\Omega$. Then \emph{g} is a submodular function if for every $X, Y \subseteq \Omega$ with $X \subseteq Y$ and every $x \in \Omega \setminus Y$ the following inequality holds
    $$g(X \cup x) - g(X) \ge  g(Y \cup x) - g(Y)\ .$$
\end{definition}

% https://www.sciencedirect.com/science/article/pii/S0005109819301281
In particular, at iteration $t$, let $x_n$ denote the chosen action sequence of agent $n$ and $x_{-n}$ denote the combined sampled action sequences of other agents. Assume that at the next iteration $t+1$, a subset of agents fails. Let $x^\prime_{-n}$ be the combined sampled action sequences of all agents except agent $n$ and the lost agents (i.e., that is $x^\prime_{-n} \subseteq x_{-n}$).

\begin{proposition}
\label{prop:nondecreasingreward}
If the global objective function $U_g$ is submodular, then $
    F_n^{(t+1)}(x^*_n) \ge F_n^{(t)}(x^*_n)$ by the diminishing return property due to submodularity, where $F_n(x_n)$ is defined in (\ref{eq:marginal}).
\end{proposition}

\begin{figure*}[!ht]
	\begin{center}
 \includegraphics[width=0.7\linewidth]{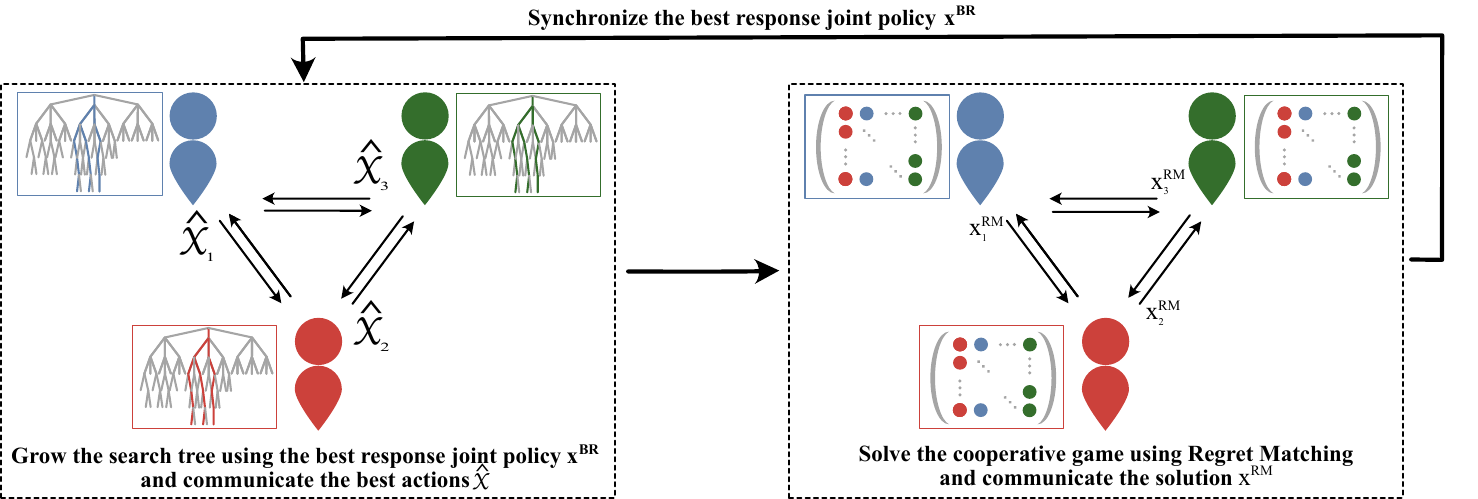}
		\caption{\footnotesize Overview of the A-MCTS algorithm. Agents incrementally grow the search using the best response policy $x^{BR}$ and communicate their best actions $\hat{\mathcal{X}}$. Regret Matching is then used to compute distributively a joint policy for the cooperative game. These solutions are synchronized and the most payoff-dominant is chosen as the best response policy $x^{BR}$.
        \label{fig:algo}}
	\end{center}
\end{figure*}

Proposition~\ref{prop:nondecreasingreward} states that if some agents fail during the mission, the remaining agents would mistakenly perceive that the contribution of their previous actions increases. Hence, they would not be aware of the actual reduction of the global utility and update their plans\footnote{For analysis of Dec-MCTS behavior when agent failures occur after the algorithm has converged, please see Appendix A in the Supplementary Material}. Fixing this issue requires both a new, context-aware way to compute the reward and a new way to coordinate the actions with other agents on this new reward function. In the next section, we present our proposed algorithm for the multi-agent planning in attrition settings problem \ref{prob:1}.

%\section{Attritable MCTS with Regret Minimization}
\subsection{Overview of the A-MCTS algorithm}
\label{sec:attritable}
We develop Attritable MCTS (A-MCTS), an online decentralized MCTS algorithm that quickly adapts to agents' attrition and efficiently coordinates action between the remaining agents. Its performance mainly relies on two key factors, including the joint-utility-guided decentralized tree search, and the best response policy given the shared intentions of others.
Each agent runs A-MCTS distributedly to plan for itself a path that is expected to maximize the total utility of the whole mission. Agents then execute the first planned action and observe any changes. After that, they perform replanning from their new state and update the planned paths based on newly available information. The search tree may be pruned by removing all children of the root except the selected branch. This cycle of planning and execution continues until the travel budget expires.
The pseudo-code of A-MCTS for agent $n$ is shown in Algorithm~\ref{alg:A-MCTS}.

The tree $\mathcal{T}_n$ of agent $n$ is incrementally built over its action sequences space $\Xc_n$ while considering the possible behaviors of others $\Xc_{-n}$ (Line 6-9). In the \emph{selection phase}, the discounted upper confidence bound on Tree (D-UCT)~\citep{best2019dec} is applied to handle the abrupt changes in reward values caused by the actions of other agents.

\begin{algorithm}[ht!]
 \caption{A-MCTS algorithm for agent $n$}
 \label{alg:A-MCTS}
 \begin{algorithmic}[1]
 \renewcommand{\algorithmicrequire}{\textbf{Input:}}
 \REQUIRE Global objective function $U_g$, actions budget $B$\\
 \renewcommand{\algorithmicrequire}{\textbf{Output:}}
 \REQUIRE best action sequence $x^\ast_n$ for agent $n$\\
 \STATE $\mathcal{T}_n\ {\leftarrow}  \textit{Initilize MCTS Tree}$
 \STATE \textbf{while} computation budget not met \textbf{do}
 \STATE \quad $\hat{\mathcal{X}_n}\ {\leftarrow} \textit{Select Subset From} (\mathcal{T}_n)$
 \STATE \quad $(\hat{\mathcal{X}}, \hat{\mathcal{X}}_{-n})\ {\leftarrow} \textit{Communicate and Update} (\hat{\mathcal{X}_n})$
 \STATE \quad $x^{BR}\ {\leftarrow} \textit{Regret Matching Coordination} (\hat{\mathcal{X}})$
 \STATE \quad \textbf{for} fixed number of iterations \textbf{do}
 \STATE \quad \quad $x_n\ {\leftarrow} \textit{D-UCT Select, Expand \& Rollout} (\mathcal{T}_n, B)$
 \STATE \quad \quad $F_n\ {\leftarrow} U_g(x_n, x^{BR}_{-n})$ 
 \STATE \quad \quad  $\mathcal{T}_n\ {\leftarrow} Backpropagation (\mathcal{T}_n, F_n)$
 % \STATE \quad \textbf{end for}
 % \STATE \textbf{end while}
 \STATE $x^\ast_n\ {\leftarrow} \textit{Best Next Action} (\mathcal{T}_n)$
 \STATE \textbf{return} $x^\ast_n$
 \end{algorithmic}
\end{algorithm}

The key idea of our proposed algorithm is to have the search trees of every agent be guided by the same utility of the joint action sequences. This is achieved by letting all agents optimize their local actions using the global utility $U_g$ directly (Line 8). Each agent can then decide its path $x_n$ independently to maximize $U_g$ and be aware of the change in the global rewards immediately if there are failures in the system.
However, the uncertainty in other agents' plans has also been shown to degrade the overall performance when using the global objective function to optimize local actions~\citep{wolpert2013probability}. To overcome this issue, we propose to let each agent improve its policy iteratively while assuming others keep their policies fixed.

More precisely, given a set of all possible action sequences of all agents $\Xc=(\Xc_n,\Xc_{-n})$, A-MCTS will periodically compute a ``best response" set of joint action sequences that maximize the joint utility for all participants $x^{BR}:= \{x^{BR}_n, x^{BR}_{-n}\}$ (Line 5). Each agent will then assume other agents coordinately determine their policies following such ``best response" $x^{BR}_{-n}$ and uses such information to compute the utility for its action sequence selection while growing the MCTS tree (Line 8). In general, the cardinality of $\mathcal{X}_n$ can be very large and it grows exponentially. To reduce the computation and communication requirements, we consider only those dynamically updated subsets $\mathcal{\hat{X}}_n \subseteq \mathcal{X}_n$ of the most promising action sequences. The set $\mathcal{\hat{X}}_n$ is chosen as the best rollouts of $M$ fixed nodes in the search tree $\mathcal{T}_n$ with the highest discounted empirical average (Line 3). We then define the following problem:

\begin{problem}\label{prob:2}(\textit{Best joint policy for multi-agent planning)}
\begin{equation}\label{eq:objectivefunctionRM}
     \underset{(x_1, x_2, \dots, x_N)}{\text{maximize}}\ U_g(x_1, x_2, ...,x_N)\
\end{equation}
%\quad Subject to:
 \begin{align}
        \text{Subject to:} \quad  &  x_n \in \hat{\Xc}_n,  \quad \forall n \in \mathcal{N}
	\end{align}
\end{problem}

The objective is to find an action profile $(x_1, x_2, \dots, x_{N})$ that maximizes the global utility $U_g(\cdot)$.
Such an optimization problem cannot be solved efficiently. Indeed it is NP-hard to maximize a submodular function~\citep{rezazadeh2023distributed}. Seeking a Nash equilibrium (NE) (where each agent policy is the best response to the others) that achieves a good efficiency compared to the optimal solution is more accessible~\citep{qu2019distributed}. A greedy algorithm is usually employed to find an approximation solution~\citep{nemhauser1978analysis}. However, we will show later with simulations that greedy solutions can be substantially suboptimal even in scenarios with few agents. In the following section, we provide a distributed regret-based solution to Problem~\ref{prob:2} that quickly and efficiently computes an NE joint policy for multi-agent systems, regardless of their complexity.

\subsection{Regret Matching For Cooperative Coordination}

% \dn{\subsection{Model of the Coordination Game}}
% https://www.sto.nato.int/publications/STO%20Meeting%20Proceedings/STO-MP-IST-166/MP-IST-166-01.pdf (Section 5.1)
% https://opus.lib.uts.edu.au/handle/10453/169276
% https://arxiv.org/pdf/2305.01869.pdf
% https://mers-papers.csail.mit.edu/Theses/Nikhil%20Bhargava/nkb-phd.pdf
% https://ieeexplore.ieee.org/stamp/stamp.jsp?arnumber=9913973
In this paper, we consider the distributed solution of the optimization problem~\ref{prob:2} where each agent decides its path based on local information and limited communication from its peers. We aim to design a decision-making method that is capable of operating and adapting with occasional communication or less, where every agent acts solely based on its local observation and does not need to constantly communicate every decision with the others. This is to guarantee that the algorithm can effectively handle the agent attrition situation described in Section~\ref{sec:attrition_settings}. The main difficulty here is how to ensure the independent decisions of the agents lead to jointly optimal decisions for the group. To address this challenge, we formulate the problem of finding for each agent an action sequence that collectively maximizes the joint utility as a multi-agent cooperative game. We then propose a distributed mechanism, where every agent independently simulates a multi-player cooperative game based on the local information available to itself and solves the game by self-play. For this purpose, a game theory learning algorithm based on the Regret Matching technique~\citep{hart2000simple} is employed to approximate the Nash equilibrium of the game.

%(without communication or explicit coordination)
% \DN{Nhat, please update Algorithm 2 accordingly to reflect the method of best Nash selection.}
% \DN{Need further discussion on why Nash solution is useful under agent attrition scenario (Duong to fill in).}
Let $\hat{\Xc}=(\hat{\Xc_n},\hat{\Xc}_{-n})$ denote the joint set of action sequences that are shared between all agents, and $x_{nm}$ denote the action sequence $m$ of agent $n$. In our approach, periodically, every agent independently constructs a matrix game in which the set of players contains all the active agents and the set of actions is $\hat{\Xc}$. At this stage, each agent applies the Regret Matching (RM) procedure as proposed in~\citep{hart2000simple} to its estimated matrix game to compute a best response joint decision. The pseudo-code of our RM game is shown in Algorithm~\ref{alg:RM}.

To further improve the performance of RM in cooperative settings, we let the agents use the global utility to calculate the regrets instead of the local utility. At each iteration $t$, an action $x_{nm} \in \Xc$ is sampled for each agent based on a probability distribution. Let $p$ denote this probability distribution where $p(x_{nm})$ is the probability for $x_{nm}$ and $\sum_{j=1}^{M} p(x_{nj}) = 1, \forall n \in \mathcal{N}$. With $x^{(t)} := \{x^{(t)}_n, x^{(t)}_{-n}\}$, we denote the sampled set at iteration $t$, where $x^{(t)}_n$ is the sample action for agent $n$ and $x^{(t)}_{-n}$ is the sampled actions for all agents except agent $n$. We then define the regret of agent $n$ for not taking action $m$ at iteration $t$ as
$R^{(t)}_{nm} = U_g(x_{nm}, x^{(t)}_{-n}) -  U_g(x^{(t)}).$
Denote $\Rc$ as the cumulative regret matrix where an element $\Rc_{nm}$ is the regret for $x_{nm}$ and $\Rc^+_{nm} = \max \{\Rc_{nm}, 0\}$. Then, the probability distribution $p$ used at the next iteration will be updated as
\begin{equation}
    p(x_{nm}) =
     \begin{cases} 
      \frac{\Rc^+_{nm}}{\sum_{m=1}^{M} \Rc^+_{nm}} & \text{if $\sum_{m=1}^{M} \Rc^+_{nm} > 0$,}\\
      \frac{1}{M} & \text{otherwise.}
    \end{cases}   \label{eq:prob_reget}
\end{equation}

\begin{algorithm}[!t]
 \caption{ \footnotesize Regret Matching Coordination algorithm \normalsize } %
 \label{alg:RM}
 \begin{algorithmic}[1]
 \renewcommand{\algorithmicrequire}
{\textbf{Input:}}
 \REQUIRE Global objective function $U_g$, joint compressed action sequences set $\hat{\Xc}$\\
 \renewcommand{\algorithmicrequire}{\textbf{Output:}}
 \REQUIRE Best response joint action sequences $x^{BR}$\\
 \STATE Every agent $n\in \Nc$ performs the following steps $2-9$ 
 \STATE \quad Initialize $\Rc$ to zeroes and $p$ to uniformly random
 \STATE \quad \textbf{for} $t=1,2, \dots$ \textbf{do}
 \STATE \quad \quad $x^{(t)} {\leftarrow} \textit{Sample} (\hat{\Xc}, p)$
 \STATE \quad \quad \textbf{for each} $x_{nm} \in \hat{\Xc}$ \textbf{do}
 \STATE \quad \quad \quad $\Rc_{nm}\ {\leftarrow} \Rc_{nm} + U_g(x_{nm}, x^{(t)}_{-n}) -  U_g(x^{(t)})$ 
 \STATE \quad \quad \quad Update $p(x_{nm})$ using Eq. (\ref{eq:prob_reget})

% \begin{equation*} 
% p(x_{nm}) {\leftarrow}
%  \begin{cases}
%  \frac{\Rc^+_{nm}}{\sum_{j=1}^{M} \Rc^+_{nm}}  & \text{if} \sum_{j=1}^{M} \Rc^+_{nm} > 0\\
%  \frac{1}{M} & \text{otherwise}
%  \end{cases}
%  \end{equation*}
 % \STATE \quad \textbf{end for}
 % \STATE \textbf{end for}
 \STATE \quad $x^{RM}_n {\leftarrow} \argmax_{x_{im} \in \hat{\Xc_i}} [p(x_{im})], \forall i \in \mathcal{N}$
 \STATE \quad $x^{RM}_{-n}\ {\leftarrow} \textit{Communicate and Update} (x^{RM}_n)$
 \STATE \textbf{return} $x^{BR} {\leftarrow} \argmax_{(x^{RM}_n,\ x^{RM}_{-n})} [U_g(x^{RM}_n,\ x^{RM}_{-n})]$
 \end{algorithmic}
\end{algorithm}

Denote the joint decision computed using RM by agent $n$, which is the set of action sequences, one per agent, that has the highest probability $p(x_{nm})$, as $x^{RM}_n$. Similarly, let $x^{RM}_{-n}$ be the computed set for all agents except agent $n$. Finally, these sets are exchanged between every agent, and the most payoff-dominant solution is chosen as the best response joint decision $x^{BR}=(x^{BR}_n,x^{BR}_{-n})$.

%========================================================
\subsection{Analysis and Discussion}
% We now provide an analysis testifying to the benefits of using Regret Matching as a distributed coordination algorithm for multi-agent cooperation.
% \subsubsection{Why RM for computing NE strategy?}
It has been shown in~\citep{berg2017exclusion} that there exists no polynomial time algorithm to compute a pure NE in multiplayer nonzero-sum stochastic games. Hence, we employ an approximate method of finding the NE by proposing a decentralized Nash selection method based on Regret Matching for making choices in a multiplayer matrix game formulated at each decision-making state. Regret Matching is a regret-based algorithm for learning strategies in games, and is often used to compute correlated equilibria in multi-player repeated games with imperfect information. Although the regret matching technique has been widely used for non-cooperative games, its application in cooperative games, such as the problem studied in our paper with a submodular utility function, has only been recently explored~\citep{nguyen2023social}. In this work, by leveraging the submodularity property of the joint objective function, we employ Regret Matching as a self-play technique to independently learn a Nash-based strategy for each player. We theoretically prove a stronger result of convergence using Regret Matching to an approximate pure-strategy Nash solution (see Definition~\ref{def:PSNE}), rather than the commonly-used correlated equilibrium, in games where players' utility functions are submodular.

\begin{definition}[Pure-Strategy Nash Equilibrium]\label{def:PSNE}
A pure-strategy Nash equilibrium (PSNE) is a joint action profile $x^*=(x^*_n,x^*_{-n}) \in \hat{\Xc}$ if for all $n\in \Nc$ and all $x_n \in \hat{\Xc}_n$ such that: 
$U_n(x^*_n,x^*_{-n}) \geq U_n(x_n,x^*_{-n})\ .$ 
\end{definition}

% Our main theoretical result is as follows:
\begin{theorem}
\label{theorem:PSNE_convergence}
The best response joint decision $x^{BR}$ computed using RM, under the assumption of submodular utility functions, is a PSNE solution of the matrix-game representation generated by the set of best feasible paths $\hat{\Xc_n}\subseteq \Xc_n$ chosen by every agent at each decision point\footnote{For proof and analysis, please see Appendix B in the Supplementary Material }.
\end{theorem}

\section{Experimental Evaluation}

\begin{figure*}[!ht]
	\begin{center}
        % \includesvg[width=\linewidth]{figs/stress.svg}
         \includegraphics[width=\linewidth]{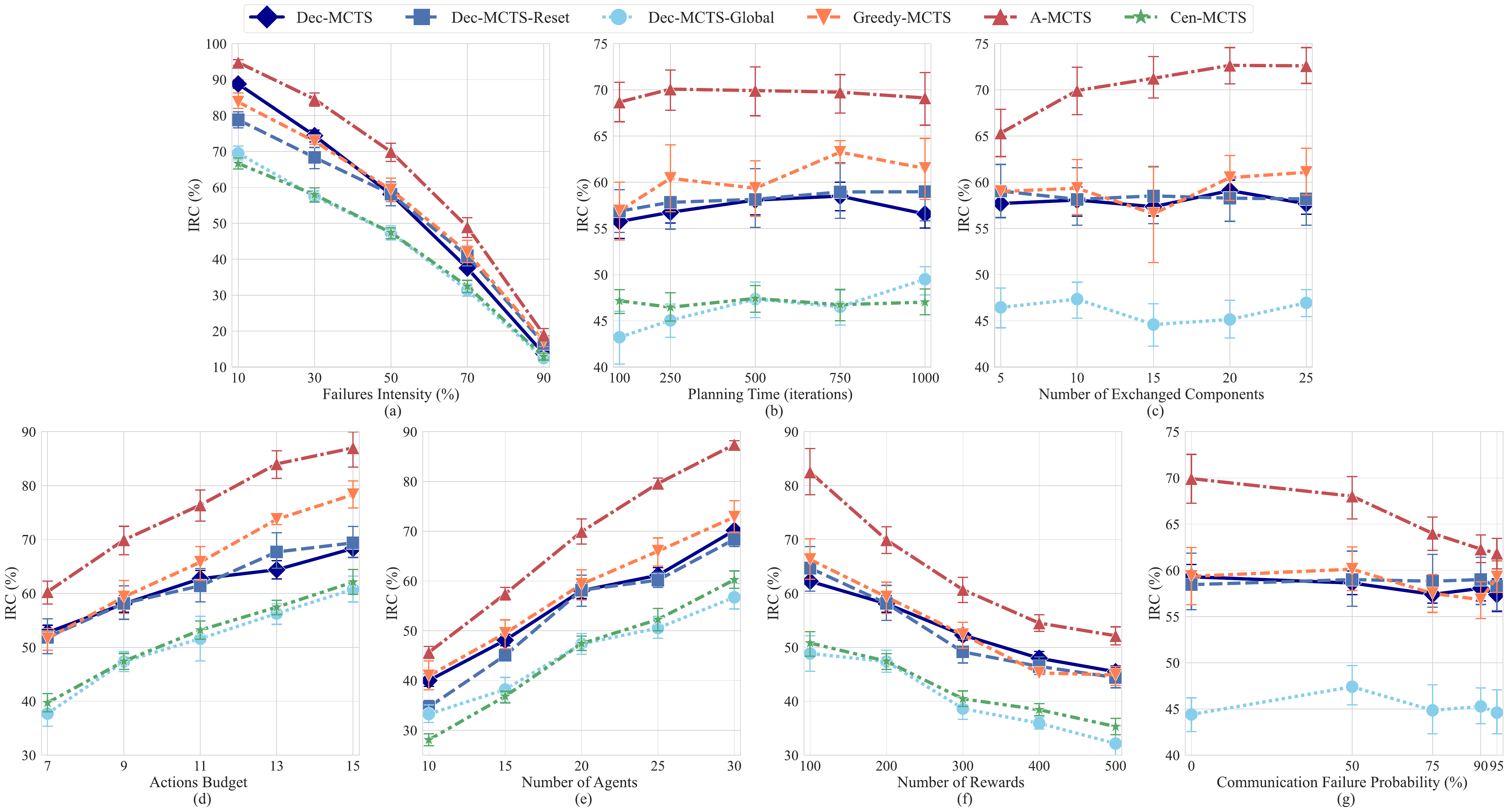}
		\caption{\footnotesize Impact of different parameters on the algorithms' performance at the mission end. Failures intensity (the fraction of agents that fail) (a); Planning time (b); Number of exchanged components (c); Actions budget (d), Number of agents (e); Number of rewards (f), and Communication failure probability (g). Results are with $95\%$ confidence interval.
        \label{fig:2}\normalsize 
        }
	\end{center}
\end{figure*}

To assess our A-MCTS algorithm, we consider the task of data collection from underwater wireless sensor networks (UWSN). Such tasks usually call for a collaboration of multiple autonomous underwater vehicles (AUVs) to traverse the environment and gather data from sensors.
The scenario consists of $200$ randomly distributed sensors in a $4000$ m $\times$ $4000$ m plane, with a transmission radius of $50$ m (typical for UWSN, e.g. \citep{cai2019data}). 
The graph $G$ of feasible paths is constructed using a probabilistic roadmap with a Dubins path model~\citep{kavraki1996probabilistic}. This model employs curves to refine the straight-line segments connecting waypoints and is extensively utilized for representing motion constraints pertinent to vehicle-like nonholonomic robots such as AUVs~\citep{41470}. The graph $G$ consists of $400$ vertices and an average of $19000$ edges.

We benchmark A-MCTS against the following baselines:
\begin{itemize}[leftmargin=*]
    \item \emph{Centralized MCTS (Central-MCTS)}: A single search tree is built for all of the $M$ harvesting agents with the actions of agent $m$ are at tree depth $(m, m + M, m + 2M,...)$.
    \item \emph{Dec-MCTS}~\citep{best2019dec}: It is the state-of-the-art decentralized multi-agent planning. In it, agents build their search tree with a marginal contribution utility function and adapt the same tree after churns occur.
    \item \emph{Dec-MCTS with reset (Dec-MCTS-Reset)}: Like Dec-MCTS, each agent builds its search tree with a marginal contribution utility function. Whenever churns occur, the tree of each agent is reset. This variant is considered to show that resetting the trees frequently could hamper the overall performance of the algorithm.
    \item \emph{Dec-MCTS with global utility (Dec-MCTS-Global}: Agents build their search tree with a global utility function and adapt the same tree after churns occur. This variant is considered to show that altering the utility function alone would not enhance performances against churns.
    \item \emph{A-MCTS with greedy optimization (Greedy-MCTS)}: In this scheme, we replace the RM Coordination (Algorithm 2) in our A-MCTS with a greedy algorithm, in which every agent sequentially picks the actions that deliver the highest immediate rewards for collaboration. This variant is considered to show that greedy solutions can be substantially suboptimal in MAS coordination.
\end{itemize}

For all algorithms, each planning phase consists of $500$ iterations, the discounting factor is set to $0.9$, and the exploration parameter is set to $0.4$ (i.e., within the ranges recommended in~\citep{best2019dec} to ensure the balance between exploration and exploitation). Unless otherwise stated, each agent compresses its tree into a set of $10$ possible paths and exchanges it with its teammates every $50$ planning iterations. Unless otherwise stated, we assume $20$ agents move in the graph, with a budget $B$ of $9$ actions. These values are chosen as they have proven to enable a high total utility score in the vast majority of scenarios considered in our experiments.

To model attrition in the population of agents, we assume that every agent has the same probability of failing during the mission and that the time at which each failure takes place is distributed uniformly at random throughout the mission duration. The key metric we use to evaluate the performance of the considered algorithms is the \emph{Instantaneous reward coverage (IRC)}, i.e. the fraction of available rewards covered (i.e. collected) at a given time.

\subsection{Performance Benchmarking}
In the first evaluation of our algorithm's adaptability to failures, we examine the impact of the failure intensity (i.e. fractions of agents that fail during the mission) on the IRC at the mission end, illustrated in Figure~\ref{fig:2}a. As expected, all algorithms experience performance declines with increasing intensity, reflecting reduced reward coverage due to fewer remaining agents in the system. Notably, with over $50\%$ of agents failing, Dec-MCTS-Reset surpasses the non-reset version due to the smaller system size which requires fewer iterations for exploration. Conversely, larger systems necessitate more time for agents to learn about the environment, hence frequent resets hamper the algorithm's performance.

To further elaborate on this matter, we assessed the impact of planning time on the IRC at the mission end. As Figure~\ref{fig:2}b shows, other baseline algorithms improved as planning time increased, with Dec-MCTS-Reset starting to outperform the non-reset with planning time larger than 750 iterations. A-MCTS, on the other hand, requires significantly less computational time yet still achieves the highest rewards regardless of the failure intensity and rates, thus proving itself as a good solution for online replanning.

The number of paths exchanged between agents significantly influences system complexity. Increased information exchange potentially leads to better algorithm performance, albeit at the expense of greater computational resources and time. To examine this trade-off, in Fig.~\ref{fig:2}c we evaluated the impact of different numbers of exchanged components on the IRC at the mission end. As expected, our proposed algorithm's performance improved with more exchanged components, while discounted algorithms showed no benefit. Indeed, with more exchanged components the utility of the joint policy found by regret matching improves too. However, given the finite number of optimal policies in a multi-agent game, escalating the number of components eventually yields diminishing returns.

As the above results show, resetting the tree would not consistently lead to improvement because the planning process involves initial exploration in which agents take random actions to learn reward distribution. Resetting without sufficient planning time results in suboptimal joint policies. Additionally, the use of the \emph{marginal contribution} utility combined with a \emph{submodular} reward function hampers agents' ability to recognize global reward reduction and hence adapt to failures efficiently. Moreover, sampling other agents' action sequences increases variance in global utility estimation and degrades the coordination quality. By assuming that the policies of other agents are fixed, A-MCTS can overcome this instability issue and adapt to agent failures better. Indeed, with regret matching aids in discovering better joint policies and thus provides better guidance for the exploration-exploitation of the search tree, our method exhibits superior performances in all cases.

In another set of experiments, we evaluated the impact of action budget $B$, the number of agents $N$, the number of rewards, and the communication failure probability, for a default failure intensity of $50\%$. As Fig. \ref{fig:2}d shows, A-MCTS managed to outperform Dec-MCTS substantially despite the difficulty of decentralized planning with a growing actions budget. Furthermore, as we doubled the budget of the action, the superiority of A-MCTS over the other techniques tripled. A similar behavior is exhibited by the system when we vary the number of agents. As shown in Fig.~\ref{fig:2}e, with a small number of agents, the differences between our algorithm and the discounted methods grow to $20\%$ with an increasing number of agents. In the considered settings, we also assess the impact of the density of sensors on the algorithms' performance by varying the number of sensors within the same area. As shown in Fig.~\ref{fig:2}f, the IRC declines as more sensors are introduced in the system. Naturally, with an increasing number of sensors the area that must be covered by agents expands as well. Nevertheless, A-MCTS shows better scalability as it consistently outperforms other methods.

The effectiveness of cooperative MAS is notably influenced by the quality of inter-agent communication. To understand better the impact of such limitations, we evaluated the algorithms' performances under different communication failure probabilities between each agent pair during a mission. As shown in Fig.~\ref{fig:2}g, there is no notable decline in A-MCTS performance even when half of the communication is disrupted, and it continues to outperform baseline methods with severely unstable communication. This highlights A-MCTS's ability to enable efficient cooperation among agents in hostile environments with restricted communication.

\subsection{Trade Off Between Communication Loss and Attrition for A-MCTS Analysis}

\begin{figure}[!ht]
	\begin{center}
        % \includesvg[width=\linewidth]{figs/tolerance.svg}
        \includegraphics[width=\linewidth]{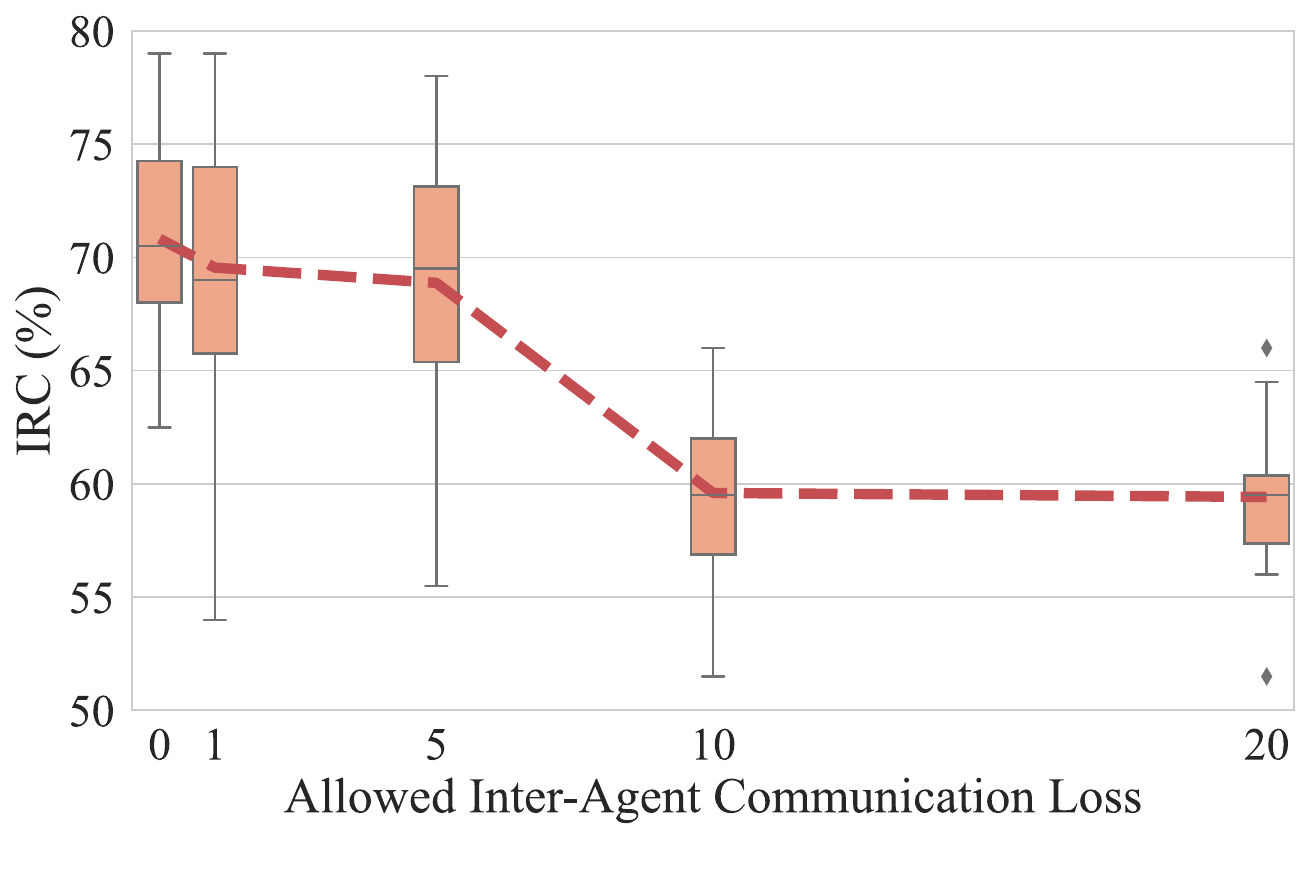}
		\caption{Impact of allowed inter-agent communication loss on the performance of A-MCTS at the end of the mission.
        \label{fig:a2}\normalsize}
	\end{center}
\end{figure}

In our approach, repeated communication loss is used as an indication of attrition. However, in practical settings, inter-agent communication can be unreliable and intermittent. If the algorithm is more sensitive to communication loss, it can mistakenly treat delayed messages as agent failures.

To better understand such impact, in this section, we study how tolerance to communication loss affects the performance of A-MCTS. Specifically, we parameterize this tolerance level by the number of times an agent must experience communication loss with another agent before treating it as attrition. Fig.~\ref{fig:a2} shows the IRC at the end of the mission against different numbers of allowed inter-agent communication loss. With up to 5 allowed messages loss, A-MCTS still shows no noticeable degradation. However, as the algorithm is more communication loss tolerant, the performance declines. This is expected because the remaining agents can not recognize churns fast enough and adapt efficiently.

\section{Conclusions}
Achieving efficient coordination in multi-agent planning for information gathering is a critical challenge in practical settings with high attrition rates. In this work, we proposed a new approach to tackle this issue. Our proposed algorithm, Attritable MCTS (A-MCTS), effectively coordinates actions among agents while adapting to agent failures by allowing all agents to jointly optimize the global utility directly with a new coordination technique based on regret matching. Our empirical evaluation demonstrates that A-MCTS improves substantially over the best existing approaches in terms of global utility and scalability in scenarios with frequent agent failures. As a follow-up, we intend to extend A-MCTS to more dynamic systems where new agents can join and the communication is probabilistic. Another line of inquiry is to study the performance of our algorithm in problems with inter-agent dependency, where the actions of an agent can only be enabled by the actions of another.

%%%%%%%%%%%%%%%%%%%%%%%%%%%%%%%%%%%%%%%%%%%%%%%%%%%%%%%%%%%%%%%%%%%%%%%%

%%% Use this command to include your bibliography file.
\clearpage
\bibliography{mybibfile}
\newpage
\appendix

\section{Technical Results}
\label{app:A}
\subsection{Details of Discounted Upper Confidence Bound on Tree (D-UCT)}

Consider an arbitrary node $s$ with a set of child nodes $\Ac_n(s)$. Whenever $s$ is visited, the child node $s^{\prime}\in \Ac_n(s)$ with the largest D-UCB is chosen as 
\begin{equation}
    a_n^{(t)}(s) =\argmax_{s^{\prime} \in \Ac_n(s)} X_n^{(t)}(s^{\prime})\ ,
\end{equation}
where $X_n^{(t)}(s^{\prime})$ is the D-UCB score for node $s^{\prime}$ at iteration $t$. $X_n^{(t)}(s^{\prime})$ is updated using the D-UCB algorithm~\citep{garivier2011upper} as follows. 

First, let $\gamma \in (1/2,1)$ be a discounting factor and $C_p>1\sqrt{8}$ an exploration constant, the upper bound confidence for child node $s^{\prime}$ is calculated as
\begin{equation}
     X_n^{(t)}(s^{\prime}, \gamma) \coloneqq \bar{F}_n^{(t)}(s^{\prime}, \gamma) + c_n^{(t)}(s^{\prime}, \gamma)\ ,
\end{equation}
where $\bar{F}_n^{(t)}(s^{\prime},\gamma)$ is the average empirical reward for choosing $s^\prime$, and $c_n^{(t)}(s^\prime, \gamma)$ is the exploration bonus. 

Denote the discounted number of times the child node $s^{\prime}$ of the parent node $s$ has been visited as
\begin{equation}
\label{eq:Nt}
   N_n^{(t)}(s^{\prime},\gamma) \coloneqq \sum\nolimits_{\tau=1}^t \gamma^{t-\tau} \textbf{1}_{ \left\{ a_n^{(\tau)}(s)=s^{\prime} \right\} }\ , 
\end{equation}
where $\textbf{1}_{ \left\{ a_n^{(\tau)}(s)=s^{\prime} \right\} }$ is the indicator function that returns $1$ if node $s^{\prime}$ was selected at round $\tau$ and $0$ otherwise.

Let $F_n^{(\tau)}$ be the rollout score at iteration $\tau \leq t$ and $N_n^{(t)}(s,\gamma)$ be the discounted number of times the parent node $s$ has been visited. Then at time $t$, the average reward for node $s^\prime$ is computed as
\begin{equation}
\label{eq:rollout_rewards}
\bar{F}_n^{(t)}(s^\prime,\gamma) = \frac{1}{N_n^{(t)}(s^{\prime},\gamma)} \sum\nolimits_{\tau=1}^{t} \gamma^{t-\tau}F_n^{(\tau)} \ \textbf{1}_{ \left\{ a_n^{(\tau)}(s)=s^{\prime} \right\} },
\end{equation}

and exploration bonus as
\begin{equation}
\label{eq:exploration}
c_n^{(t)}(s^\prime, \gamma) = 2C_p \sqrt{\frac{\log N_n^{(t)}(s,\gamma) }{N_n^{(t)}(s^{\prime},\gamma)}}.
\end{equation} 

\subsection{Analysis of Dec-MCTS Performance in Attrition Settings}

As shown in~\citep{best2019dec}, Dec-MCTS has vanishing regret and converges as $t \rightarrow \infty$. We prove here the behavior of Dec-MCTS after convergence. Recall that by convergence we mean each agent stays with the same action sequence (i.e., the UCB score for each action in such sequence is the highest at that corresponding decision node).

Assume that Dec-MCTS converges at iteration $\tau_0$ (finite) for all agents. At iteration $t > \tau_0$, let $x^*_n$ denote the converged action sequence of agent $n$, and $x^*_{-n}$ denotes the converged action sequences of every other agent except agent $n$. The rollout score received by agent $n$ for executing the action sequence $x^*_n$ given by the \emph{marginal utility }will then be a constant $L$:
\begin{equation}
\label{eq:converged_rollout}
F_n^{(t)}(x^*_n) = U_n(x^*_n,x^*_{-n}) = U_g(x^*_n,x^*_{-n}) - U_g(x^*_{-n}) = L \ .
\end{equation}

Assume that at the next iteration $t+1$, a subset of agents becomes unavailable due to failures. Let $x^\prime_{-n}$ be the combined sequences of actions taken by all agents except agent $i$ and the lost agents. That is
$$x^\prime_{-n} \subseteq x^*_{-n}.$$
and the rollout score agent $i$ receives for the same action sequence now is
\begin{equation} \label{eq:A}
    F_n^{(t+1)} (x^*_n) = U_n(x^*_n,x^\prime_{-n}) = U_g(x^*_n,x^\prime_{-n}) - U_g(x^\prime_{-n})\ .
\end{equation}

\begin{lemma}\label{le:suboptimality}
Assume that the Dec-MCTS algorithm has converged on all the agents at time step $\tau_0$ and that the global objective function $U_g$ is submodular. Then
$$X_n^{(t+1)}(x^*_n,\gamma) \geq X_n^{(t)}(x^*_n,\gamma),\ \forall t \geq \tau_0.$$ 
\end{lemma}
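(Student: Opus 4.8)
The plan is to split the D-UCB score into its two components, $X_n^{(t+1)}(x_n^*,\gamma) = \bar{F}_n^{(t+1)}(x_n^*,\gamma) + c_n^{(t+1)}(x_n^*,\gamma)$, and to argue that after attrition the exploration bonus is unchanged while the empirical-reward term cannot decrease. The crux is a single application of submodularity to compare the rollout scores in \eqref{eq:converged_rollout} and \eqref{eq:A}.

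First I would establish the reward inequality $F_n^{(t+1)}(x_n^*) \geq F_n^{(t)}(x_n^*) = L$. Since $x'_{-n} \subseteq x_{-n}^*$, submodularity of $U_g$ (equivalently, the diminishing-returns property) applied with the smaller set $x'_{-n}$, the larger set $x_{-n}^*$, and the added block $x_n^*$ gives
$$U_g(x_n^*,x'_{-n}) - U_g(x'_{-n}) \geq U_g(x_n^*,x_{-n}^*) - U_g(x_{-n}^*).$$
The left-hand side is exactly $F_n^{(t+1)}(x_n^*)$ by \eqref{eq:A} and the right-hand side is $L$ by \eqref{eq:converged_rollout}, so writing $L' \coloneqq F_n^{(t+1)}(x_n^*)$ we obtain $L' \geq L$. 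Intuitively, removing agents shrinks the context into which agent $n$'s plan is inserted, and diminishing returns make its marginal contribution only larger.

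Next I would treat the two score components separately. For the exploration bonus, I would use that convergence at $\tau_0$ means $x_n^*$ is the selected sequence at every iteration $t \geq \tau_0$; consequently the discounted visit counts have reached the fixed point of the recursion $N \mapsto \gamma N + 1$, namely $N_n^{(t)}(x_n^*,\gamma) = N_n^{(t)}(s,\gamma) = 1/(1-\gamma)$, and these values are identical at $t$ and $t+1$ (the failures at $t+1$ change the observed reward but not which sequence is selected). Hence $c_n^{(t+1)}(x_n^*,\gamma) = c_n^{(t)}(x_n^*,\gamma)$ from \eqref{eq:exploration}. For the empirical reward, convergence makes $\bar{F}_n^{(t)}(x_n^*,\gamma) = L$, and the recursive form of \eqref{eq:rollout_rewards} gives
$$\bar{F}_n^{(t+1)}(x_n^*,\gamma) = \frac{\gamma N_n^{(t)}(x_n^*,\gamma)\, L + L'}{\gamma N_n^{(t)}(x_n^*,\gamma) + 1} \geq L,$$
using $L' \geq L$. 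Adding the two components then yields $X_n^{(t+1)}(x_n^*,\gamma) \geq X_n^{(t)}(x_n^*,\gamma)$.

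The main obstacle I anticipate is the exploration-bonus term rather than the submodularity step, which is essentially immediate. Away from the exact fixed point, e.g.\ at $t = \tau_0$ before the pre-convergence transient has been fully discounted, the counts $N_n^{(t)}(x_n^*,\gamma)$ and $N_n^{(t)}(s,\gamma)$ are still increasing, so $c_n^{(t+1)}$ and $c_n^{(t)}$ need not coincide and the bonus could move in either direction (a larger parent count raises $\log N_n^{(t)}(s,\gamma)$ while a larger child count lowers $1/\sqrt{N_n^{(t)}(x_n^*,\gamma)}$). I would therefore either invoke convergence in the strong sense that the discounted counts have stabilized, so the bonus is genuinely constant and the whole gain comes from $\bar{F}$, or else bound the transient contributions by $\gamma^{t-\tau_0}$ and absorb them, verifying that for $t \geq \tau_0$ the empirical-reward increase from $L$ to $L'$ dominates any change in the bonus. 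Making this last step airtight without a quantitative handle on $L' - L$ is the delicate point.
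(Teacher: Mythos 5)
Your proposal follows essentially the same route as the paper's own argument: the decisive step in both is the submodularity (diminishing-returns) inequality showing that the post-attrition rollout score $A = F_n^{(t+1)}(x_n^*)$ satisfies $A \geq L$, combined with the observation that in the converged regime the discounted visit counts approach the geometric fixed point so the exploration bonus stabilizes and the gain comes entirely from the empirical-reward term. The delicate point you flag --- that near $\tau_0$ the counts have not yet stabilized and the bonus could move either way --- is not resolved in the paper either, whose proof only verifies the inequality in the regime $t \gg \tau_0$ via the limiting form of $f(t)$, so your treatment is no less rigorous than the original.
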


Lemma \ref{le:suboptimality} essentially states that once Dec-MCTS converges, the D-UCB score calculated by agent $n$ for its converged action sequence $x_n$ is non-decreasing even if it detects that some of the other agents have failed. Hence, it always picks and updates the same action sequence (i.e., the series of actions that has the highest D-UCB scores at each corresponding decision node).

Let $c, p \in x^*_n$ be two nodes in the converged action sequence of agent $i$, with $c$ being the child node of $p$. After the algorithm converges at iteration $\tau_0$, by definition, the nodes $c$ and $p$ are going to be selected repeatedly. Thus, at iteration $t$, the discounted number of times $c$ is visited can be written as
\begin{eqnarray} \label{N_n_c}
\begin{aligned}
    N^{(t)}_n(c,\gamma) &= \gamma^{t-\tau_0}\ N^{(\tau_0)}_c + \sum_{\tau=0}^{t-\tau_0} \gamma^{\tau}  \\ 
    &= \gamma^{t-\tau_0}\ N^{(\tau_0)}_c + \frac{1-\gamma^{t-\tau_0+1}}{1-\gamma}, 
\end{aligned}
\end{eqnarray}
with the constant $N^{(\tau_0)}_c$ is the discounted number of times node $c$ is chosen at $\tau_0$. In addition, the discounted number of times $p$ is visited at iteration $t$ is
\begin{eqnarray} \label{N_n_p}
\begin{aligned}
    % N^{(t)}_n(p,\gamma) = \gamma^{t-\tau_o}N^{(\tau_0)}_o + N^{(t)}_n(c,\gamma) = \gamma^{t-\tau_o}N^{(\tau_0)}_o + D - K\gamma^t, \nonumber
    N^{(t)}_n(p,\gamma) &= \gamma^{t-\tau_0}\ N^{(\tau_0)}_p + \sum_{\tau=0}^{t-\tau_0} \gamma^{\tau} \\ 
    &= \gamma^{t-\tau_0}\ N^{(\tau_0)}_p + \frac{1-\gamma^{t-\tau_0+1}}{1-\gamma}, 
\end{aligned}
\end{eqnarray}
with the constant $N^{(\tau_0)}_p$ is the discounted number of times node $p$ is chosen at $\tau_0$. Finally, the accumulated rollout score for $c$ at iteration $t$ is
\begin{eqnarray} \label{F_n_c}
\begin{aligned}
    \sum_{\tau=1}^t \gamma^{t-\tau} F^{(\tau)}_n \textbf{1}_{ \left\{ a_n^{(\tau)}(p)=c \right\}} &= \gamma^{t-\tau_0} F^{(\tau_0)}  + L\ \sum_{\tau=0}^{t-\tau_0} \gamma^{\tau} \\ 
    &= \gamma^{t-\tau_0} F^{(\tau_0)}  + L\ \frac{1-\gamma^{t-\tau_0+1}}{1-\gamma}\ ,
\end{aligned}
\end{eqnarray}

with the constant $F^{(\tau_0)}$ is the accumulated rollout score for $c$ at $\tau_0$, and $L$ is the rollout score for $c$ at every iterations up to $\tau_0$ as given in (\ref{eq:converged_rollout}). For brevity of notations, we denote the following values
\begin{eqnarray} \label{eq:NPCF}
    \begin{aligned}
        &N_c = \gamma^{t-\tau_0}\ N^{(\tau_0)}_c + \frac{1-\gamma^{t-\tau_0+1}}{1-\gamma}, \\
        &N_p = \gamma^{t-\tau_0}\ N^{(\tau_0)}_p + \frac{1-\gamma^{t-\tau_0+1}}{1-\gamma}, \\
        &F_c = \gamma^{t-\tau_0}\ F^{(\tau_0)}  + L\ \frac{1-\gamma^{t-\tau_0+1}}{1-\gamma}, \\
        & A  = F_n^{(t+1)} (x^*_n).
    \end{aligned}
\end{eqnarray}

At iteration $t+1$ when failures occur, the values for the discounted numbers of times node $c$ and $p$ are chosen, and the accumulated rollout score for $c$ can be updated as
\begin{eqnarray} \label{t+1}
\begin{aligned}
    &N^{(t+1)}_n(c,\gamma) = N_c + \gamma^t, \\
    &N^{(t+1)}_n(p,\gamma) = N_p + \gamma^t, \\ 
    &\sum_{\tau=1}^{t+1} \gamma^{t+1-\tau} F^{(\tau)}_n \textbf{1}_{ \left\{ a_n^{(\tau)}(p)=c \right\}} = F_c\ \gamma + A\ ,
\end{aligned}
\end{eqnarray}

\noindent with $A$ is the rollout score for $c$ at iteration $t+1$ as given above. The inequality of Lemma \ref{le:suboptimality} now can be written as
\begin{eqnarray} \label{f_t}
\begin{aligned}
 &\frac{F_c\gamma + A}{N_c + \gamma^t} + c_p \sqrt{\frac{2\log(N_p + \gamma^t)}{N_c + \gamma^t}} \geq \frac{F_c}{N_c} + c_p \sqrt{\frac{2\log(N_p)}{N_c}} \\
 \Leftrightarrow &\frac{F_c\gamma + A}{N_c + \gamma^t} - \frac{F_c}{N_c}&  \\
 &+ c_p\left(\sqrt{\frac{2\log(N_p + \gamma^t)}{N_c + \gamma^t}} - \sqrt{\frac{2\log(N_p)}{N_c}}\right) \geq 0.
\end{aligned}
\end{eqnarray}

Let 
\begin{equation}
f(t) = \frac{F_c\gamma + A}{N_c + \gamma^t} - \frac{F_c}{N_c} + c_p\left(\sqrt{\frac{2\log(N_p + \gamma^t)}{N_c + \gamma^t}} - \sqrt{\frac{2\log(N_p)}{N_c}}\right)    
\end{equation}
be a funtion of time $t$ over the set of fixed paramters $\{\gamma, c_p, F_c, N_c, N_p\}$.

It can be verified that $f(t)$ is an increasing function as the derivative of $f(t)$ is positive for $t \gg \tau_0$. In addition, as $t \gg \tau_0$, the inequality of (\ref{f_t}) becomes:
\begin{eqnarray}
\begin{aligned}
 \frac{F_c\gamma + A}{N_c} &\geq \frac{F_c}{N_c} \\ \nonumber
 \Leftrightarrow \quad \quad \quad A &\geq F_c(1-\gamma) \\ \
 \Leftrightarrow \quad \quad \quad A &\geq \left(\gamma^{t-\tau_0} F^{(\tau_0)}  + L\frac{1-\gamma^{t-\tau_0+1}}{1-\gamma}\right)(1-\gamma) \\
 \Leftrightarrow \quad \quad \quad  A &\geq L.
\end{aligned}
\end{eqnarray}

The last inequality follows from the assumption that the global utility function $U_g$ is submodular. That is, having failures as time $t+1$ implies there are fewer agents collecting rewards, hence the local utility for agent $i$ increases (or remains the same). Thus $A \geq L$.

Since Proposition 1 gives that $ F_n^{\tau+1}(x^*_n) \ge F_n^{\tau}(x^*_n)$, there exists a $\tau_0$ for which $f(t)$ is non-negative for some $t \gg \tau_0$. This implies the UCB scores for each node in the actions sequence $x^*_n$ will remain the highest. Thus, agent $i$ will continue to select $x^*_n$  after failures. This concludes the proof.

\begin{figure}[!ht]
	\begin{center}
		\includegraphics[width=\linewidth]{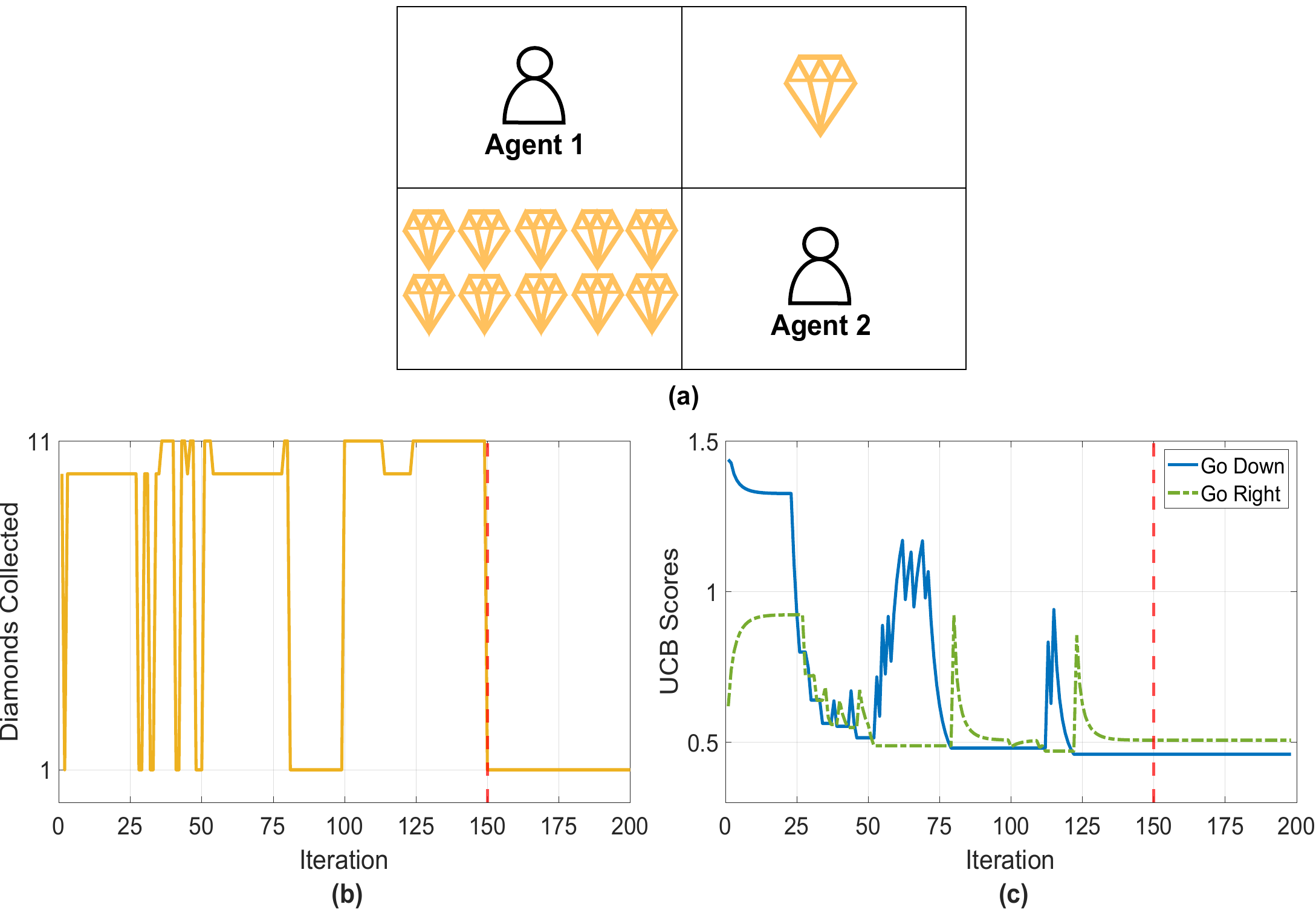}
		\caption{\small  Diamonds collection game (a), number of diamonds collected (b), and D-UCB score for each action of Agent 1 (c).
            \label{fig:intro}\normalsize}
	\end{center}
\end{figure}

Let's illustrate the significant implication of the lemma through an example. Consider a grid-world diamonds collection game~\citep{sewak2019deep}, where two agents play in a team using Dec-MCTS. An exploration factor $C_p$ and a discounting factor $\gamma$ for Dec-MCTS are chosen as $0.5$ and $0.75$ respectively. As shown in Figure~\ref{fig:intro}, when simulating the game we see that the D-UCB scores for Agent 1 fluctuate until it converges to $\{0.4607, 0.5072\}$ (after $135$ iterations in our example). The empirical average reward of \emph{Go Right} is estimated by Agent 1 by dividing its contribution (one diamond) to the global utility ($11$ diamonds), while the discounted number of visits is approximately $4$, yielding the asymptotic score of $0.5072$. The D-UCB score for \emph{Go Down} (the sub-optimal action) is non-deterministic depending on the random choices made by the two agents during the initial transient. This value is not updated in convergence as that branch of the search tree is not sampled, due to the MCTS selection policy.

Using the marginal contribution as the utility improves stability and convergence speed, but it causes issues when agents fail, as shown. At iteration $150$, Agent 2 fails (or leaves the game). In this case, even if the optimal choice for Agent 1 would be \emph{Go Down} (due to the higher amount of diamonds), it sticks with \emph{Go Right}. This happens because both the exploration bonus and the local contribution to the overall \emph{hypothetical global reward} remain the same, despite the real global reward has been reduced.

\section{Proof of Theorem 1}
\label{app:B}
Before going through the proof of Theorem 1, we first prove the existence of at least one PSNE in the formulated game.

\begin{lemma}\label{Existence_PSNE}
A finite coordination game will always have at least one PSNE, if maximizing players' local utilities corresponds to maximizing the global objective, i.e., the players' local utility functions satisfy, $\forall x_n, x^{\prime}_n\in\hat{\Xc}_n,\ \forall x_{-n}\in\hat{\Xc}_{-n},\ \forall n\in \Nc\,$
\begin{eqnarray}
\begin{aligned} \label{eq:potential_function}
    U_n(x_n,x_{-n})-U_n(x^{\prime}_n,x_{-n}) &> 0 \\
    \Rightarrow\ \Phi(x_n,x_{-n})-\Phi(x^{\prime}_n,x_{-n}) &> 0
\end{aligned}
\end{eqnarray}
\noindent where $\Phi(\cdot)$ is a function that represents the global objective.
\end{lemma}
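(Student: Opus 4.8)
The plan is to recognize that hypothesis (\ref{eq:potential_function}) is precisely the defining property of an \emph{ordinal potential game} with ordinal potential $\Phi$, and then to invoke the classical existence argument of Monderer and Shapley: every finite ordinal potential game admits a pure-strategy Nash equilibrium. The whole proof hinges on the interplay between finiteness (which guarantees that $\Phi$ attains a maximum) and the sign-alignment condition (which forces any such maximizer to be immune to profitable unilateral deviations).

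First I would observe that the joint strategy space $\hat{\Xc} = \hat{\Xc}_1 \times \cdots \times \hat{\Xc}_{|\Nc|}$ is finite, since each $\hat{\Xc}_n$ is finite and $\Nc$ is finite. Consequently $\Phi$, being a real-valued function on a finite set, attains its maximum at some profile $x^* = (x^*_n, x^*_{-n})$. The claim is that this maximizer is a PSNE. I would then argue by contradiction: suppose $x^*$ is not a PSNE, so some agent $n$ has a profitable unilateral deviation, i.e.\ there exists $x'_n \in \hat{\Xc}_n$ with
$$U_n(x'_n, x^*_{-n}) - U_n(x^*_n, x^*_{-n}) > 0.$$
Applying (\ref{eq:potential_function}) with the improving strategy $x'_n$ in the first slot and $x^*_n$ in the second slot yields $\Phi(x'_n, x^*_{-n}) - \Phi(x^*_n, x^*_{-n}) > 0$, i.e.\ $\Phi(x'_n, x^*_{-n}) > \Phi(x^*_n, x^*_{-n})$. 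This contradicts the maximality of $\Phi$ at $x^*$. Hence no agent can profitably deviate, and $x^*$ is a PSNE.

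The computational content here is routine; the one point requiring care is the \emph{direction} of the implication in (\ref{eq:potential_function}). The existence argument consumes exactly the stated direction — a strict local-utility gain forces a strict potential gain — so a potential-maximizing profile cannot admit a beneficial deviation. I therefore expect the only (minor) obstacle to be instantiating the hypothesis cleanly: one must assign $x'_n$ (the improving strategy) as the first argument of (\ref{eq:potential_function}) and $x^*_n$ as the second, so that the deviation inequality $U_n(x'_n, x^*_{-n}) - U_n(x^*_n, x^*_{-n}) > 0$ correctly triggers the corresponding strict increase in $\Phi$. No ties need to be broken and no continuity, convexity, or the converse implication is required, so finiteness of $\hat{\Xc}$ together with the single stated implication suffices to close the argument.
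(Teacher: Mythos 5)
Your proof is correct and follows essentially the same route as the paper: finiteness guarantees a maximizer of the potential $\Phi$, and the alignment condition (\ref{eq:potential_function}) forces that maximizer to admit no profitable unilateral deviation, hence it is a PSNE. If anything, your contradiction argument handles the one-directional implication more carefully than the paper's version, which asserts an equivalence between potential and utility comparisons that the stated hypothesis does not literally provide.
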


\begin{proof}
Every finite coordination game in which the global objective function is aligned with the local utility functions of the players, that is, satisfies the property as in~\eqref{eq:potential_function}, is a generalized ordinal potential game~\citep{marden2009cooperative}.
Let $\phi$ be a potential function of a coordination game $\Gc$. Then the equilibrium set of $\Gc$ corresponds to the set of local maxima of $\phi$. That is, an action profile $x =(x_n,x_{-n})$ is a NE point for $\Gc$ if an only if for every $n\in\Nc$,
$$\phi(x) \geq \phi(x^{\prime}_n, x_{-n}),\ \forall x^{\prime}_n\in \hat{\Xc}_n \ .$$
%Consequently, if $\phi$ admits a maximal value in $\Ac$, which is true by definition for a finite set $\Ac$, then $\phi$ processes a pure strategy Nash equilibrium.

Consider $x^*=(x_n^*,x_{-n}^*)\in\hat{\Xc}$ for which $\phi(x^*)$ is maximal (which is true by definition for a finite set $\hat{\Xc}$), then for any $x^{\prime}=(x_n^{\prime},x_{-n})$:
\begin{equation*}
\phi(x_n^*,x_{-n}^*) > \phi(x^{\prime}_n,x_{-n}) 
\Leftrightarrow U_n(x_n^*,x_{-n}^*) > U(x^{\prime}_n,x_{-n})\ .
\end{equation*}
Hence, the game possesses a pure strategy NE.
\end{proof}

We now proceed with the main proof. It is well known that, for any finite matrix game, if all players apply the same Regret Matching policy the empirical distribution of all players' joint action converges to the set of {\it Coarse Correlated Equilibria (CCE)}~\citep{hart2000simple}. We prove a stronger result of convergence to a PSNE under the assumption of submodular utility functions. 

\begin{figure*}[!ht]
	\begin{center}
        \includegraphics[width=\linewidth]{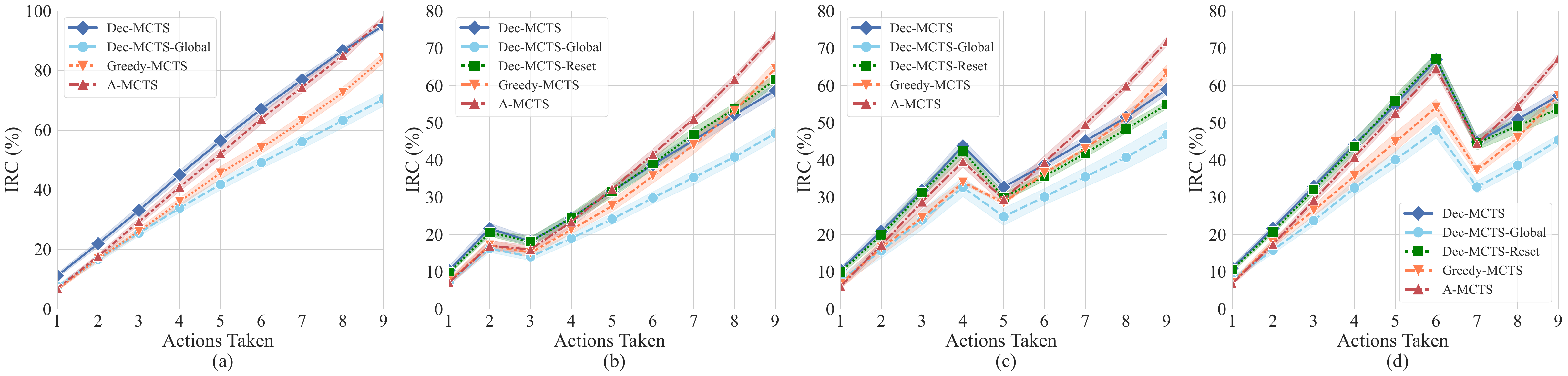}        
		\caption{Evolution over the mission of the Instantaneous Reward Coverage (IRC) in the \emph{Forced Failure} setting for different times of attrition: no attrition (a), attrition after 2 actions (b), attrition after 4 actions (c), and attrition after 6 actions (d). Results are with $95\%$ confidence.\label{fig:a1}}
	\end{center}
\end{figure*}

As we formulate the problem of multi-agent information gathering as maximization of a submodular function, the considered matrix game generated by the active agents and their corresponding sets of best feasible paths at each decision point satisfies the following two properties: 
\begin{itemize}
    \item \emph{Property 1}: $\displaystyle\sum_{n\in\Nc} \lambda_n\ U_n(x)$ is concave in $x$,
    \item \emph{Property 2}: $U_n(x_n,x_{-n})$ is convex in $x_{-n}$,
\end{itemize}
where $x:=(x_n,x_{-n})$ denotes a pure joint action in which agent $n$ chooses path $x_n$ and the other agents select $x_{-n}$.
The combination of the two properties implies that player $n$'s local utility function $U_n(\cdot)$ is concave in $x_n$ given $x_{-n}$ is fixed. 

Let $x$ be a CCE of the considered game, and let $\bar{x} = \Er_{\pi} \lsb x\rsb$, we then prove that $\bar{x}$ is a pure strategy NE of the game. Without loss of generality, assume that $\lambda_n = 1,\ \forall n \in N$. As $x$ is a CCE point, it satisfies
\begin{equation} \label{eq:CCE_property}
\Er \lsb U_n(x) \rsb \geq \Er \lsb U_n(x_n^{\prime},x_{-n}) \rsb \ ,
\end{equation}
for every $n \in N$ and every action $x_n^{\prime}\in\hat{\Xc}_n$. Also, since $\bar{x}\in \hat{\Xc}$, using Property 2 we have
\begin{equation} \label{eq:CCE_property2}
\Er \lsb U_n(x_n^{\prime}, x_{-n} \rsb \geq U_n \big( x_n^{\prime}, \Er \lsb x_{-n} \rsb \big) = U_n(x_n^{\prime}, \bar{x}_{-n})\ .
\end{equation}

Combining~\eqref{eq:CCE_property} and~\eqref{eq:CCE_property2} yields
\begin{equation} \label{eq:CCE_property3}
\Er \lsb U_n(x) \rsb \geq U_n(x_n^{\prime}, \bar{x}_{-n}) \ .
\end{equation}

Replacing $x_n^{\prime} = \bar{x}_n$ and then summing over all $n\in N$
\begin{equation}
\sum_{n\in N} \Er \lsb U_n(x) \rsb \geq \sum_{n \in N} U_n(\bar{x}_n, \bar{x}_{-n}) = \sum_{n\in N} U_n(\bar{x}) \nonumber.
\end{equation}

Using Property 1 implies 
\begin{equation}
\sum_{n \in N} \Er \Big[ U_n (x) \Big] = \Er \lsb \sum_{n \in N} U_n(x) \rsb \leq \sum_{n \in N} U_n \Big( \Er \lsb x \rsb \Big)\ \nonumber.
\end{equation}

Therefore
$$\sum_{n\in N} \Er \Big[ U_n (x) \Big] = \sum_{n \in N}U_n(\bar{x}) \ .$$

Thus, $U_n(\bar{x})=\Er\lsb U_n(x)\rsb$ for every $n$, and~\eqref{eq:CCE_property3} becomes
$$U_n(\bar{x})\geq U_n(x_n^{\prime}, \bar{x}_{-n})\ .$$
for every $x^{\prime}_n\in\hat{\Xc}_n$. Therefore, $\bar{x}$ is a pure Nash equilibrium. This implies that the time average of the joint action of all players converges to a PSNE solution.

\section{Additional Experimental Resuls}
\label{app:D}
\subsection{Time of Attrition Analysis}

To perform a baseline evaluation of our algorithm, we consider a setting with no attrition and measure the task performance in terms of instantaneous reward coverage throughout the mission. As shown in Fig.~\ref{fig:a1}a, under this static environment, although the IRC of A-MCTS appeared to be less than Dec-MCTS initially, it ended up comparable and even slightly outperformed the state-of-the-art at the end of the mission. This shows that our proposed algorithm can discover paths that guarantee more long-term rewards and thus is also a good fit for multi-agent coordinated information gathering in general settings. 

To evaluate our algorithm's adaptability to failures, we considered the \emph{Forced Failure} setting, in which after a specific number of actions have been taken, half of the agents (chosen at random) become unavailable. Specifically, Fig.~\ref{fig:a1}b, c, and d shows the instantaneous reward coverage with attrition at the early stage (e.g., after 2 actions), middle stage (e.g., after 4 actions), and later stage (e.g., after 6 actions) of the mission respectively. As these figures show, resetting the tree for replanning produced no significant benefits compared to those that adopted the same tree. This is because every MCTS process starts with the exploration phase where agents intentionally take random actions to learn the reward distribution. As such, resetting the tree without sufficient planning would cause the produced joint policy from this period to be sub-optimal. In addition, as Dec-MCTS uses the \emph{marginal contribution} as the utility function, it is unable to recognize the reduction of the global reward and hence is unable to adapt to failures efficiently. Indeed, the gap between it and Dec-MCTS-Global is halved compared to the case with no failure. However, using the global utility function alone is not enough, as sampling other agents' action sequences introduces a lot of variance in the estimation of the global utility. By assuming that the policies of other agents are fixed, both A-MCTS and Greedy-MCTS can overcome this instability issue and adapt to agent failures better, with A-MCTS performing the best in all cases as the regret matching method allowing the agents to discover better joint policies and thus provides better guidance for the exploration-exploitation of the search tree. It is also interesting to note that the superiority of A-MCTS compared to Dec-MCTS is slightly reduced (from $15\%$ to $10\%$) as attrition occurs later. This is expected as when some agents failed in the final stage of the mission, the remaining agents would not have enough action budget left to recover the lost rewards.
%%%%%%%%%%%%%%%%%%%%%%%%%%%%%%%%%%%%%%%%%%%%%%%%%%%%%%%%%%%%%%%%%%%%%%

\end{document}